\titleformat*{\section}{\Large\bfseries} 
\titleformat*{\subsection}{\large\bfseries} 
\numberwithin{equation}{section}
\newcommand{\n}{\noindent}
\newcommand{\be}{\begin{equation}}
\newcommand{\ee}{\end{equation}}
\newcommand{\ben}{\begin{displaymath}}
\newcommand{\een}{\end{displaymath}}
\newcommand{\vs}{\vspace{0.2cm}}
\newtheorem{Theorem}{Theorem}[section]
\newtheorem{Remark}[Theorem]{Remark}
\newtheorem{Proposition}[Theorem]{Proposition}
\newtheorem{Aux-Lemma}[Theorem]{Aux-Lemma}
\newtheorem*{Acknowledgment}{Acknowledgments}
\newcommand{\dist}{{\rm dist}}
\newcommand{\An}{\mathcal{A}}
\newcommand{\stM}{{\bf M}}
\newcommand{\sM}{\Sigma}
\newcommand{\stg}{{\bf g}}
\newcommand{\sg}{g}
\newcommand{\stgeo}{\Gamma}
\newcommand{\sgeo}{\gamma}
\newcommand{\const}{\eta}
\newcommand{\length}{{\rm length}}
\begin{document}

\n {\huge The asymptotic of static isolated systems and a generalised uniqueness for Schwarzschild.} 

\vs\vs

\n {\sc Martin Reiris}

\n {{\tt martin@aei.mpg.de}}

\vspace{.1cm}

\n {\sc Max Planck Institute f\"ur Gravitationsphysik}

\n {\tt Am M\"uhlenberg 1 D-14476 Golm, Germany}
\vs\vs

\hspace{.7cm}
\begin{minipage}[l]{11cm}
\begin{spacing}{.9}{\small It is proved that any static system that is spacetime-geodesically complete at infinity, and whose spacelike-topology outside a compact set is that of $\mathbb{R}^{3}$ minus a ball, is {\it asymptotically flat}. The matter is assumed compactly supported and no energy condition is required. A similar (though stronger) result applies to black holes too. This allows us to state a large generalisation of the uniqueness of the Schwarzschild solution not requiring asymptotic flatness. The Korotkin-Nicolai static black-hole shows that, for the given generalisation, no further flexibility in the hypothesis is possible.}
\end{spacing}
\end{minipage}

\n {\sc PACS}: 02.40.-k, 04.20.-q.

\n Keywords: General Relativity, static solutions, asymptotic.

\section{Introduction} 
Asymptotic flatness is the basic notion used in General Relativity (GR) to model systems that can be thought as ``isolated'' from the rest of the universe. It was used by Einstein himself at least in heuristic form and is now a standard piece of differential geometry and of gravitational and theoretical physics. 

The notion of asymptotic flatness is also epistemologically linked to the Newtonian theory of gravitation. 
\footnote{The following passage is made upon a text prepared by me to a highlight in CQG+.}In the 1916 manuscript {\it The Foundation of the Generalised Theory of Relativity}, Einstein addressed what he called {\it an epistemological defect} (but not mistake) of classical mechanics, whose origin he linked to E. Mach. He imagined two bodies, A and B, made of the same fluid material and sufficiently separated from each other that none of the properties of one could be attributed to the existence of the other. Observers at rest in one body see the other body rotating at a constant angular velocity, yet these same observers measure a perfect round surface in one case and an ellipsoid of rotation in the other case. He then asked: ``Why is this difference between the two bodies?". Necessarily, he continues, the answer cannot be found inside the system A+B only; It must lie in its exterior: the outer empty space. Einstein found that the source of the peculiar disparity was omitting that the empty space should also obey physical laws. These laws, which treat the parts A and B of the system A + B + {\sc Exterior Empty Space} on an equal footing, are the Einstein equations of GR. 
There is one point in Einstein's elegant conclusions that is left slightly inconclusive. It can be argued on the base of GR, that the absolute space of the 18th and 19th centuries was an inevitable concept, as ``corrections'' to the Newtonian gravity are simply too small. Though this is unquestionable, it can also be demanded to GR to explain too, why this ``background solution'', representing the {\sc Exterior Empty Space} of the system described earlier, is so distinguished in a theory that treats the geometry and the asymptotic of space, essentially as a variable. 

We find then that a problem of some theoretical importance is to analyse asymptotically flat (AF) solutions within the set of solutions of General Relativity and to find contexts in which they are indeed inevitable. Regardless of the ``aesthetic'' motivation just described, the study of the asymptotic of spacetimes is of course interesting in itself and can provide relevant information on structure of solutions to the Einstein equations. 

To give our result a framework, we redefine here {\it static isolated systems} in the simplest possible way without assuming asymptotic flatness at infinity. We prove then that these systems are necessarily AF. The definition of {\it static isolated system} is as follows. The region of the spacetime outside some set containing the sources should of the form
\be\label{STBOLD}
\stM=\mathbb{R}\times (\mathbb{R}^{3}\setminus \mathbb{B}^{3}),\qquad \stg=-N^{2}dt^{2}+g
\ee
where here $\mathbb{B}^{3}$ is the unit open ball in $\mathbb{R}^{3}$, $N>0$ is the lapse function (the norm of the static Killing field), and $g$ is a three-metric in $\mathbb{R}^{3}\setminus \mathbb{B}^{3}$. Moreover this spacetime region should be {\it geodesically complete until its boundary}, namely, spacetime geodesics (of any spacetime character) either end at its boundary or are defined for infinite parametric time. 

Admittedly, the topological condition, (which as we will se below is fundamental), is motivated mostly by historical considerations, although of course, to model a system like a neutron star, it is meaningless to make any other choice. On the other hand the geodesic completeness until the boundary of (\ref{STBOLD}) is the most basic condition that one can impose to ensure that the spacetime is, roughly speaking, ``endless''. From now on we will call it {\it geodesic completeness at infinity}; This terminology is justified by the following fact: geodesic completeness until the boundary holds iff every spacetime geodesic, whose projection into $\mathbb{R}^{3}\setminus \mathbb{B}^{3}$ leaves any compact set, is complete. 

In this setup we prove,

\begin{Theorem}\label{TII} Static isolated systems are asymptotically flat with Schwarzschildian fall off.
\end{Theorem}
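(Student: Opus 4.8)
The plan is to work entirely in the asymptotic region, where the problem reduces to one about complete static \emph{vacuum} ends; in particular, since the matter is compactly supported, no energy condition ever enters the analysis. Fix a large ball so that outside it the spacetime is Ricci-flat. First I would record the static vacuum equations for $(\stM,\stg)$ in the exterior, $N\,\text{Ric}(g)=\nabla^{2}N$ and $\Delta_{g}N=0$, tracing which shows that $g$ is \emph{scalar-flat}, $R(g)=0$. Writing $u=\ln N$ these become $\text{Ric}(g)=\nabla^{2}u+du\otimes du$ and $\Delta u=-|\nabla u|^{2}$, so the Bakry--Émery tensor $\text{Ric}(g)-\nabla^{2}u=du\otimes du\ge 0$ is nonnegative and $u$ is harmonic for the weighted Laplacian $\Delta+\langle\nabla u,\nabla\cdot\rangle$; equivalently, the four-metric $\stg$ is Ricci-flat. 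This packages all the field equations into two structures — a nonnegative (weighted) Ricci curvature and a (weighted-)harmonic potential — which is exactly the input comparison geometry needs.

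Second, I would use geodesic completeness at infinity to keep the lapse non-degenerate. The conserved energy $E=-N^{2}\dot t$ couples the evolution of $N$ along a spacetime geodesic to its affine length, so that a degeneration $N\to 0$ (an incipient horizon) along an escaping geodesic, or $N\to\infty$, is incompatible with completeness; combined with a Harnack estimate for the positive harmonic function $N$ this should confine $N$ between two positive constants. This uniform non-degeneracy of the lapse is what keeps $\stg$ uniformly controlled for the arguments to follow.

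Third — and this is the step I expect to be the main obstacle — I would establish genuine \emph{geometric} control of the end: Euclidean (cubic) volume growth of geodesic balls and decay of curvature. The Ricci-flatness of $\stg$ gives Bishop--Gromov volume comparison, and through $\varepsilon$-regularity for Ricci-flat metrics a volume bound converts into a curvature bound. The difficulty is that nonnegative Ricci alone permits non-Euclidean model ends — precisely the cylindrical/Kasner asymptotics realised by the Korotkin--Nicolai black holes — so the topological hypothesis $\mathbb{R}^{3}\setminus\mathbb{B}^{3}$ must be used decisively: its spherical cross-sections at infinity forbid $S^{1}$- or torus-type ends and pin the volume growth to the Euclidean rate. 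Concretely I would run a blow-down argument, rescaling $g_{\lambda}=\lambda^{-2}g$ and extracting a tangent cone at infinity; the bounded-curvature, Euclidean-volume limit is a Ricci-flat metric cone, and the spherical link forces it to be the flat $\mathbb{R}^{3}$ with no angle deficit. Interior gradient estimates then give $|\nabla N|\to 0$ and $N\to$ constant, which I normalise to $1$.

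Finally, with the asymptotic model identified as flat and $N\to 1$, I would bootstrap to the sharp fall-off. Since $u=\ln N$ is harmonic on a now asymptotically Euclidean end and tends to $0$, its harmonic expansion gives $N=1-m/|x|+O(|x|^{-2})$ for a constant $m$; feeding this into $\text{Ric}(g)=\nabla^{2}u+du\otimes du$ and integrating the resulting elliptic system — or equivalently running the Israel--Robinson monotonicity of $\oint_{\{N=t\}}|\nabla N|\,dA$ over the level sets of $N$ — yields $g=(1+2m/|x|)\delta+O(|x|^{-2})$, i.e.\ the Schwarzschildian asymptotics asserted in Theorem~\ref{TII}.
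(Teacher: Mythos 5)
Your step 2 is where the proposal breaks, and it is exactly the point where the paper's real work lies. The claimed mechanism --- that the conserved energy $E=-N^{2}\dot t$ makes a degeneration $N\to 0$ or $N\to\infty$ along an escaping geodesic ``incompatible with completeness'' --- is false. In a static spacetime the affine parameter of a null geodesic is (up to a constant) the $N^{2}g$-arc length of its spatial projection, and this can diverge while $N$ degenerates. Concretely, the static Kasner metrics $-z^{2p_{1}}dt^{2}+dz^{2}+z^{2p_{2}}dx^{2}+z^{2p_{3}}dy^{2}$ (the very asymptotics of the Korotkin--Nicolai solution) are geodesically complete at infinity although $N=z^{p_{1}}$ tends to $0$ or to $\infty$ according to the sign of $p_{1}$: for $p_{1}=-1/3$ an outgoing radial null geodesic satisfies $dz/d\tau\propto z^{1/3}$, hence has infinite affine length while $N\to 0$ along it. So geodesic completeness does not pin $N$ between positive constants, and all your later steps, which take this uniform non-degeneracy as input ``to keep $\stg$ uniformly controlled'', inherit the gap; note moreover that the pinching $C_{1}<N<C_{2}$ can only be known \emph{after} asymptotic flatness is established, so the architecture is circular. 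What geodesic completeness actually yields --- and this is the sole new technical content of the paper (Propositions \ref{PI}, \ref{PII} and \ref{PIII}) --- is the weaker statement that the conformal metric $N^{2}g$ is metrically complete. Proving that cannot be done curve-by-curve: to contradict geodesic completeness one must exhibit an honest spacetime geodesic of finite affine length escaping to infinity, and only geodesics of $N^{-2}g$ lift to null spacetime geodesics. The paper therefore assumes $N^{2}g$ incomplete, builds separating surfaces $S_{j}$ in dyadic annuli, shows via the maximum principle and the Harnack inequality (\ref{HARN}) that $\sum_{j}\max\{N;S_{j}\}2^{2j}<\infty$ and that $N\to0$ uniformly, and then takes a ray of the (then complete) metric $N^{-2}g$ and bounds its $N^{2}g$-length by that series, producing the incomplete null geodesic. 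Nothing in your proposal produces a geodesic at all.

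A second, lesser discrepancy: once completeness of $N^{2}g$ is known, the paper simply invokes Theorem 1.3 of \cite{MR3233266}, whereas you propose to re-derive asymptotic flatness from scratch via volume comparison and a tangent cone at infinity. As written this does not run either: $\stg$ is Lorentzian, so Ricci-flatness of $\stg$ gives no Bishop--Gromov comparison, and nonnegativity of a Bakry--\'Emery tensor on $(\Sigma,g)$ does not by itself force Euclidean volume growth --- ruling out collapsed, Kasner-like ends whose cross-sections are nevertheless spheres is precisely the hard theorem of \cite{MR3233266}, \cite{MR3233267}, not a soft consequence of the topology ``pinning'' the growth rate. Your final bootstrap (harmonic expansion of $\ln N$ on an asymptotically Euclidean end) is standard and fine, but it sits on top of two steps that, as stated, fail.
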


This theorem is an expression of the remarkable consistence of General Relativity as a physical theory and shows the inevitability of asymptotic flatness in certain contexts.

\begin{figure}[h]
\centering
\includegraphics[width=7.5cm,height=6.5cm]{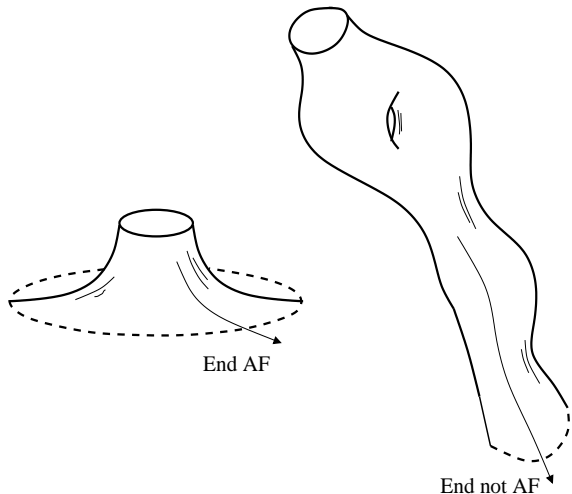}
\caption{Representation of an AF end and a non-AF end.}
\label{Figure1}
\end{figure}

To understand the importance and scope of the conditions defining static isolated systems, let us bring two purely relativistic examples into consideration. The first is the Schwarzschild black hole. It is a static vacuum solution with a topological-spherical hole, its curvature decays to zero at infinity, and the spacetime is geodesically complete at infinity.  Yet, (though not always properly emphasised), Schwarzschild it is not the only static vacuum black hole solution in 3+1 dimensions enjoying these attributes. The other solution we are referring to is the Korotkin-Nicolai static black hole \cite{94aperiodic}. It represents a topologically-spherical hole, not inside an open (infinite) three-ball $\mathbb{B}^{3}$ as in Schwarzschild, but inside an open (infinite) solid-torus $\mathbb{B}^{2}\times \mathbb{S}^{1}$. It is axially symmetric and has the asymptotic of a static Kasner \cite{94aperiodic} spacetime.  Its space is not simply connected; For this reason the horizon is prolate, as it feels the influence of itself along an axis of symmetry of finite length. The particular Kasner asymptotic is the simultaneous result of the presence of the hole on one side and of the non-trivial global topology on the other. Finite covers of the solution yield static spacetimes with a finite number of black holes in equilibrium. From the point of view of the General theory of Relativity, the Korotkin-Nicolai and the Schwarzschild solution are perfectly acceptable, still one is AF and the other is not. This shows that, in Theorem \ref{TII}, the topological assumption required for isolated systems cannot be removed. 
\vs

The proof of Theorem \ref{TII} is based on the results \cite{MR3233266}, \cite{MR3233267} where AF was proved under the extra hypothesis that (outside a compact set) $N$ is bounded from below away from zero\footnote{The definition of static isolated system in \cite{MR3233266} is the same as the one here, except that it includes the hypothesis that $N$ is bounded from below away from zero, see the remark inside the proof of Theorem \ref{TII}.}. This hypothesis was used only to guarantee that the conformal metric $N^{2}g$ is metrically complete, property that was used fundamentally. In a sense, all that we do in this article is to remove this undesired hypothesis on $N$ but for static solutions. We show that the completeness of $N^{2}g$ holds always in static isolated systems, as we defined them earlier. The techniques of this article do not apply directly to strictly stationary solutions (cf. Remark \ref{LIUSV}). The question of whether strictly stationary isolated systems are always AF is still open, though, (as shown in \cite{MR3233266}, \cite{MR3233267}), they are AF when the norm of the Killing field is bounded from below away from zero at infinity. 
\vs

Along the same lines as in Theorem \ref{TII}, we can generalise the celebrated uniqueness of the Schwarzschild solution (Israel \cite{Israel}\footnote{Israel breakthough in 1967, was the first uniqueness theorem for Schwarschild and required that $N$ could be chosen as a global radial coordinate.}, Robinson \cite{RobinsonII}, Bunting-Masood Um Alam \cite{MR876598}) to a uniqueness statement among an (a priori) much larger class of static solutions than those AF. Accordingly, we consider static solutions given by a vacuum static data $(\Sigma; g, N)$, i.e. with
\be
N Ric =\nabla\nabla N,\qquad \Delta N=0,
\ee
and with compact but not necessarily connected horizon $\partial \Sigma = \{N=0\}\neq \emptyset$. As earlier, the solutions are said to be {\it geodesically compete at infinity} if spacetime geodesics, of any spacetime character, either end at the horizon (i.e. the boundary) or are defined for infinite parametric time.  

The theorem is the following.

\begin{Theorem}\label{TI}
Let $(\sM;g,N)$ be the data set of a static vacuum spacetime with compact horizon and geodesically complete at infinity. Then, the spacetime is Schwarzschild iff a connected component of the complement of a compact set in $\Sigma$ is diffeomorphic to $\mathbb{R}^{3}\setminus \overline{\mathbb{B}^{3}}$.
\end{Theorem}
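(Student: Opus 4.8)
The forward implication is immediate: for the Schwarzschild solution the static spatial slice outside the horizon is diffeomorphic to $(2m,\infty)\times\mathbb{S}^{2}$, hence to $\mathbb{R}^{3}\setminus\{0\}$, so the complement of any compact set containing the horizon is a connected end diffeomorphic to $\mathbb{R}^{3}\setminus\overline{\mathbb{B}^{3}}$. All the content is in the converse, and the plan is to reduce it to the classical Schwarzschild uniqueness theorem by first \emph{deriving} asymptotic flatness rather than assuming it.

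Let $E$ denote the given end diffeomorphic to $\mathbb{R}^{3}\setminus\overline{\mathbb{B}^{3}}$. The first and principal step is to apply the black-hole analogue of Theorem \ref{TII} announced in the abstract: since $(\Sigma;g,N)$ is static vacuum, geodesically complete at infinity, and $E$ carries the topology of $\mathbb{R}^{3}$ minus a ball, the end $E$ is asymptotically flat with Schwarzschildian fall-off. Exactly as for Theorem \ref{TII}, the heart of this step is to establish that the conformal metric $N^{2}g$ is metrically complete over $E$; once that completeness is in hand one may quote the asymptotic analysis of \cite{MR3233266} and \cite{MR3233267} (where completeness was instead secured through the hypothesis that $N$ be bounded away from zero) to obtain $N=1-m/r+O(1/r^{2})$ together with the Euclidean decay of $g$ along $E$.

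With $E$ asymptotically flat I would next pin down the global picture. Since $N$ is harmonic, vanishes on the compact horizon $\partial\Sigma=\{N=0\}$, and tends to $1$ along $E$, the maximum principle yields $0<N<1$ on the interior and singles out $N\to1$ as the asymptotic value on $E$. The delicate point — which I expect to be the main obstacle specific to this theorem — is to show that $E$ is the \emph{only} end, i.e.\ that the hypotheses leave no room for a second, necessarily non-asymptotically-flat, end of Korotkin--Nicolai type \cite{94aperiodic}; the classical uniqueness machinery is formulated for a single asymptotically flat end, so this global control is precisely what must be supplied beyond the asymptotic analysis (and may in fact already be subsumed in the stronger black-hole version of Theorem \ref{TII}). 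I would attack it by combining the metric completeness of $N^{2}g$ with the behaviour of the harmonic function $N$ on any hypothetical extra end in order to exclude it.

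Granted a single asymptotically flat end, the solution is an asymptotically flat static vacuum black hole with compact horizon and $N\to1$, and I would conclude by the conformal-doubling argument of Bunting--Masood-ul-Alam \cite{MR876598} (equivalently Israel \cite{Israel} or Robinson \cite{RobinsonII}): the harmonic factors $1\pm N$ give scalar-flat rescalings $(1\pm N)^{4}g$ of $\Sigma$, which glue smoothly along the totally geodesic horizon $\{N=0\}$ and, because $(1-N)\sim m/r$ compactifies the second copy to a point, produce a complete, scalar-flat, asymptotically flat manifold whose ADM mass one computes to vanish. The rigidity case of the positive mass theorem then forces this doubled manifold to be flat $\mathbb{R}^{3}$, and unwinding the conformal factors identifies $(\Sigma;g,N)$ with the Schwarzschild data; in particular the horizon comes out connected, so no separate treatment of disconnected horizons is required.
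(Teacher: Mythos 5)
Your overall architecture coincides with the paper's: derive asymptotic flatness of the given end $E$ from the completeness of $N^{2}g$ (Proposition \ref{PIII}) together with \cite{MR3233266}, \cite{MR3233267}, establish that there is only one end, and then invoke the classical Israel--Robinson--Bunting--Masood-ul-Alam theorem (the paper routes this last step through the topological classification of \cite{MR1201655} rather than re-running the doubling argument, but that is a minor difference). The genuine gap is exactly at the point you yourself flag as ``the delicate point'': you never prove that $E$ is the only end, and neither of your two escape routes works. First, this step is not ``subsumed in the stronger black-hole version of Theorem \ref{TII}'' announced in the abstract, because that stronger result \emph{is} Theorem \ref{TI}; there is no separate statement to quote, the single-end claim is precisely what must be proved. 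Second, your proposed attack --- combining the completeness of $N^{2}g$ with the behaviour of the harmonic function $N$ on a hypothetical extra end --- cannot succeed as stated: on an extra end one has no topological information at all, so Proposition \ref{PII} (which requires $\mathbb{R}^{3}\setminus\mathbb{B}^{3}$ topology) does not apply, and $N$ could a priori tend to zero, to infinity, or oscillate there without any of these behaviours being contradictory by maximum-principle or potential-theoretic reasoning alone. The Korotkin--Nicolai solution \cite{94aperiodic} shows that perfectly regular static vacuum ends exist which are not asymptotically flat; excluding a second end therefore requires playing the existing AF end \emph{against} the hypothetical one, not analysing the hypothetical one in isolation.

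That interaction is what the paper's proof supplies, and it is the real content of Theorem \ref{TI} beyond Theorem \ref{TII}. One passes to the conformal metric $N^{-2\epsilon}g$ with $0<\epsilon<\sqrt{2}-1$: by Proposition \ref{PIV} the space $(\Sigma_{\delta};N^{-2\epsilon}g)$ is metrically complete with strictly convex boundary, and by the computation in Section \ref{BM} its Bakry--Emery Ricci tensor $Ric^{c}_{f}=Ric+\nabla\nabla f-\frac{1}{c}\nabla f \nabla f$ vanishes with weight constant $c>0$. One then chooses, far out in the AF end, an almost-round sphere $S$ on which $\theta_{S}-(1+\epsilon)\,n(N)/N>0$, and uses the boundary convexity to shoot a minimizing geodesic ray from $S$ into the second end. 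Along this ray the quantity $m=\theta+(1+\epsilon)N'/N$ satisfies the Riccati-type inequality $m'\leq -m^{2}/(2+c)$ of \cite{MR2577473}, and since $m(0)<0$ it follows that $m$, and hence the expansion $\theta$, reaches $-\infty$ in finite parameter. This focal point contradicts the minimality of the ray, ruling out the second end. Without this (or some equivalent) comparison-geometry argument your proof does not close; the asymptotic-flatness machinery and the harmonicity of $N$ alone cannot deliver the single-end statement.
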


Observe that in this statement nothing is said about the (if any) other connected components of the complement of the compact set. In principle there could be many other unbounded connected components. That this cannot happen must be discerned after some analysis. This is in spirit similar to ``topological censorship'' - type of theorems as in \cite{MR1757638}, although our technique is different as we cannot rely on any given structure at infinity. 
\vs

In parallel to the discussion given at the beginning of the introduction, it is worth noting that Theorem \ref{TII} can be interpreted as a result on ``asymptotic uniqueness'', (here asymptotic flatness), and that, in this sense, it is a close relative of the uniqueness of the flat Minkowski spacetime among complete (simply connected) vacuum static spacetimes proved by M. T. Anderson in \cite{MR1806984}. Anderson's result is a direct consequence of a curvature decay that we will explain in Section \ref{BM}. We stress however that such decay is not nearly sufficient to deduce asymptotic flatness. The Korotkin-Nicolai solution satisfies this curvature decay and is not AF.

The rest of the article is roughly organised as follows. Sections \ref{BM}, \ref{BCP} and \ref{SPTG} deal with some important facts about the global structure of the vacuum static solutions. Section \ref{PROOF} contains the proofs of Theorems \ref{TII} and \ref{TI}. Proposition \ref{PI} shows the existence of a natural partition of static ends of the form $\mathbb{R}^{3}\setminus \mathbb{B}^{3}$. Proposition \ref{PII} then proves that the lapse $N$ can have only three types of behaviours at infinity and Proposition \ref{PIII} proves the completeness of $N^{2}g$ on the end. The proof of Theorems \ref{TII} and \ref{TI} are given afterwards.   

\begin{Acknowledgment} {\rm I am grateful to Marc Mars for interesting discussions on related topics.} 
\end{Acknowledgment}
\section{Background material.}\label{BMW}

A smooth Riemannian metric $g$ on a smooth connected manifold $\Sigma$ (with or without boundary, compact or not) induces the metric 
\be
\dist(p,q)= \inf\big\{\length(\gamma_{pq}):\gamma_{pq}\ \text{smooth curve joining $p$ to $q$}\big\}.
\ee
The space $(\Sigma;g)$ is said {\it metrically complete} if $(\Sigma; \dist)$ is complete. 
If $\Sigma$ has compact boundary then metric completeness is equivalent to the {\it geodesic completeness until the boundary} of $(\Sigma;g)$, (by Hopf-Rinow). On the other hand, geodesics in $(\Sigma;g)$ lift to geodesics perpendicular to the static Killing field in the associated spacetime. i.e. in 
\be
\stM=\mathbb{R}\times \Sigma,\qquad \stg=-N^{2}dt^{2}+g
\ee
Hence, if $\partial \Sigma$ is compact, geodesic completeness until the boundary of $(\stM;\stg)$ implies metric completeness of $(\Sigma;g)$. This is used in Proposition \ref{PIII}. 

Geodesic completeness until the boundary of $(\stM;\stg)$ is a basic assumption in the two main theorems in this article. However, regarding possible mathematical applications, it is important when possible to assume only the metric completeness of the data. We will make some remarks in this respect.
\vs

If $\partial \Sigma\neq \emptyset$, we define the {\it metric annulus} $\mathcal{A}(a,b)$ of radii $0<a<b$ by
\be
\mathcal{A}(a,b)=\big\{p\in \Sigma: a<\dist(p,\partial \Sigma)<b\big\}
\ee
where $\dist(p,\partial \Sigma)=\inf\{\dist(p,q): q\in \partial \Sigma\}$.    

\subsection{Anderson's curvature decay.}\label{BM}
{\it Anderson's curvature decay}\cite{MR1806984} is an important property of static solutions. It says that there is a universal constant $\const>0$ such that for any static data $(\Sigma; g,N)$ we have
\be\label{CURVDEC}
|Ric|(p)\leq \frac{\const}{\dist_{g}^{2}(p,\partial \Sigma)},\quad \text{and}\quad \bigg|\frac{\nabla N}{N}\bigg|^{2}(p)\leq \frac{\const}{\dist_{g}^{2}(p,\partial \Sigma)}
\ee
The optimal constant $\eta$ can be seen to be greater or equal than one, but it is not know if it is one. Upper bounds can be given but far from one. 

As an application of the curvature decay let us prove here a proposition that will be used in the proof of Theorem \ref{TI} to rule multiple ends when it is known that there is one that is AF. In the statement we use $\Sigma_{\delta}$ to denote the manifold resulting from removing from $\Sigma$ the tubular neighbourhood of $\partial \Sigma$ and radius $\delta$, i.e. $\Sigma_{\delta}=\Sigma\setminus \{p: \dist_{g}(p,\partial \Sigma)<\delta\}$. We assume that $\delta < \delta_{0}$ with $\delta_{0}$ small enough that $\partial \Sigma_{\delta}$ is always smooth.
\begin{Proposition}\label{PIV}
Let $(\Sigma;g,N)$ be a static vacuum initial data set with compact horizon ($\partial \Sigma=\{N=0\}\neq \emptyset$) and $(\Sigma;g)$ metrically compete. Then there is $0<\epsilon_{0}<1$ such that for every $\epsilon<\epsilon_{0}$ there is $\delta<\delta_{0}$ such that $(\Sigma_{\delta}; N^{-2\epsilon}g)$ is metrically complete and $\partial \Sigma_{\delta}$ is strictly convex (with respect to the inward normal).
\end{Proposition}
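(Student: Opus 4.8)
The plan is to treat the two conclusions separately, using Anderson's curvature decay (\ref{CURVDEC}) as the only quantitative input. Throughout write $\tilde g = N^{-2\epsilon}g = e^{2\phi}g$ with $\phi=-\epsilon\log N$, let $r(p)=\dist_g(p,\partial\Sigma)$, which is smooth on a fixed tubular neighbourhood $\{r<\delta_{0}\}$ of the compact boundary, and let $\nu=\nabla r$ be the inward $g$-unit normal to the level sets $\{r=\delta\}=\partial\Sigma_{\delta}$.

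For completeness the key is a polynomial growth bound on $N$. Along a minimizing $g$-geodesic issuing from $\partial\Sigma$ the distance to the horizon equals arclength, so the second inequality in (\ref{CURVDEC}), $|\nabla\log N|\le \sqrt{\eta}/r$, integrates to
\be
N(p)\ \le\ C\, r(p)^{\sqrt{\eta}},\qquad p\in\Sigma_{\delta},
\ee
for a constant $C=C(\delta)$; equivalently the conformal factor obeys $N^{-\epsilon}(p)\ge C^{-\epsilon}\,r(p)^{-\epsilon\sqrt{\eta}}$. Now let $\gamma$ be any curve in $\Sigma_{\delta}$ that eventually leaves every compact set. Since $(\Sigma;g)$ is complete and $\partial\Sigma$ is compact, $(\Sigma_{\delta};g)$ is complete too, so $\gamma$ has infinite $g$-length; parametrizing by $g$-arclength $\ell$ and using $r(\gamma(\ell))\le r_{0}+\ell$ one gets
\be
\length_{\tilde g}(\gamma)=\int N^{-\epsilon}\,d\ell\ \ge\ C^{-\epsilon}\!\int_{0}^{\infty}(r_{0}+\ell)^{-\epsilon\sqrt{\eta}}\,d\ell\ =\ \infty
\ee
provided $\epsilon\sqrt{\eta}\le 1$. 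Hence, setting $\epsilon_{0}=1/\sqrt{\eta}$ (decreased slightly if needed so that $\epsilon_{0}<1$, which is possible since $\eta\ge1$), every divergent curve has infinite $\tilde g$-length and $(\Sigma_{\delta};\tilde g)$ is metrically complete for all $\epsilon<\epsilon_{0}$ and every $\delta<\delta_{0}$.

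For convexity I would use the conformal transformation law of the second fundamental form. Writing $II$ (resp.\ $\widetilde{II}$) for the second fundamental form of $\partial\Sigma_{\delta}$ with respect to the inward normal in $g$ (resp.\ $\tilde g$), a direct computation gives
\be
\widetilde{II}\ =\ e^{\phi}\Big(II+\epsilon\,\frac{\partial_{\nu}N}{N}\,g\big|_{T\partial\Sigma_{\delta}}\Big).
\ee
Since $g$ is smooth up to the compact $\partial\Sigma$, the level sets $\{r=\delta\}$ have second fundamental form bounded uniformly, $|II|\le C_{0}$, for $\delta<\delta_{0}$. On the other hand $N\to 0$ at the horizon while $\partial_{\nu}N\to\kappa>0$, the (positive) surface gravity of the non-degenerate horizon; thus $N=\kappa r(1+o(1))$ and $\partial_{\nu}N/N=r^{-1}(1+o(1))$, so on $\partial\Sigma_{\delta}$ the term $\epsilon\,\partial_{\nu}N/N$ is bounded below by $\epsilon/(2\delta)$ for $\delta$ small. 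Choosing $\delta<\min(\delta_{0},\epsilon/(2C_{0}))$ makes the right-hand side positive definite, i.e.\ $\widetilde{II}>0$, which is the asserted strict convexity.

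The delicate point is the near-horizon input used for convexity: one must know that the horizon is non-degenerate ($\kappa>0$) with $g$ regular up to $\partial\Sigma$, so that $\partial_{\nu}N/N$ blows up like $1/r$ and dominates the bounded $II$; for static vacuum data with a finite-distance compact horizon this structure is available. The completeness half is more robust, its only subtlety being that the growth bound on $N$ must be integrated against the true distance-to-horizon (not merely against $\delta$), since it is this that produces the borderline-integrable exponent $\epsilon\sqrt{\eta}\le1$ and hence the value of $\epsilon_{0}$.
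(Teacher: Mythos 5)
Your proof is correct, and its completeness half is essentially the paper's own argument: Anderson's decay \eqref{CURVDEC} integrated along minimizing geodesics from $\partial\Sigma$ gives the polynomial growth bound on $N$ (the paper writes the exponent as $\const$ rather than your sharper $\sqrt{\const}$, which is immaterial since it is a universal constant), and then $\int N^{-\epsilon}\,d\ell=\infty$ along any curve going to infinity once $\epsilon<1/\sqrt{\const}$, exactly as you do; the paper phrases this via sequences of points whose $g$-distance to $\partial\Sigma_{\delta_{0}}$ diverges rather than via divergent curves, a cosmetic difference, and likewise obtains completeness for \emph{every} $\delta<\delta_{0}$ with only the convexity forcing $\delta$ to depend on $\epsilon$. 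Where you add genuine content is the convexity claim, which the paper dismisses as ``direct (and we leave it to the reader)'': your conformal transformation law $\widetilde{II}=e^{\phi}\big(II+\epsilon\,\tfrac{\partial_{\nu}N}{N}\,g|_{T\partial\Sigma_{\delta}}\big)$ (in the sign convention for which the boundary of a round ball is convex to the inward normal) is right, and the mechanism --- the uniformly bounded $II$ of the level sets being dominated by the blow-up of $\epsilon\,\partial_{\nu}N/N$ --- is precisely the ``blowing up of the boundary'' the paper alludes to. One improvement on the point you flagged as delicate: you do not need horizon non-degeneracy or surface gravity as an external input. Since $N$ is $g$-harmonic, positive in the interior and vanishing exactly on the smooth compact boundary, the Hopf boundary point lemma gives $\partial_{\nu}N\geq \kappa_{0}>0$ on $\partial\Sigma$, and smoothness of $N$ up to the compact boundary gives $N\leq C_{1}r$ near it; hence $\partial_{\nu}N/N\geq \kappa_{0}/(2C_{1}\delta)$ on $\{r=\delta\}$ for small $\delta$, which is all your argument requires.
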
 
\begin{proof}[\bf Proof.] Given $0<\epsilon<1$, the convexity of $\partial \Sigma_{\delta}$ for $\delta\leq \delta_{0}$ small enough is direct (and we leave it to the reader) as the factor $N^{-2\epsilon}$ ``blows up'' the boundary $\partial \Sigma$ uniformly, (observe however that, as $\epsilon<1$, $\partial \Sigma$ ``remains'' at a finite distance from the bulk of $\Sigma$).

So let us prove that, if we chose $\epsilon$ small enough, the space $(\Sigma_{\delta},N^{-2\epsilon}g)$ is metrically complete. As we are assuming $\delta< \delta_{0}$, it is enough to prove that, if $\epsilon$ is small enough, $(\Sigma_{\delta_{0}},N^{-2\epsilon}g)$ is metrically complete. We will do that below, the argument is thus independent of $\delta$.

It is enough to prove that, (if $\epsilon$ small enough), the following holds: for any sequence of points $p_{i}$ whose $g$-distance to $\partial \Sigma_{\delta_{0}}$ diverges, the $(N^{-2\epsilon}g)$-distance to $\partial \Sigma_{\delta_{0}}$ also diverges. Equivalently, it is enough to prove that for any sequence of curves $\gamma_{i}$ starting at $\partial \Sigma_{\delta_{0}}$ and ending at $p_{i}$ we have
\be
\int_{0}^{s_{i}}\frac{1}{N^{\epsilon}(\gamma_{i}(s))}ds\longrightarrow \infty
\ee
where $s$ is the $g$-arc length of $\gamma_{i}$ starting from $\partial \Sigma_{\delta_{0}}$. The curvature decay (\ref{CURVDEC}) implies right away the estimate, 
\be
N(p)\leq c(1+\dist_{g}(p,\partial \Sigma_{\delta_{0}}))^{\const}
\ee
for any $p\in \Sigma$ and where $\const>0$ is universal but $c$ depends on $(\Sigma,g)$ and $\delta_{0}$. As $\dist_{g}(\gamma_{i}(s),\partial \Sigma_{\delta_{0}})\leq s$, then we have
\be
N(\gamma_{i}(s))\leq c(1+s)^{\const}
\ee
Thus, if $\epsilon<1/\const$ then,
\begin{align}
\int_{0}^{s_{i}}\frac{1}{N^{\epsilon}(\gamma_{i}(s))}ds & \geq \int_{0}^{s_{i}} \frac{1}{c^{\epsilon}(1+s)^{\const\epsilon}}ds=\frac{1}{c^{\epsilon}(1-\epsilon \const)}\big((1+s_{i})^{1-\epsilon \const} -1\big)\\
& \geq \frac{1}{c^{\epsilon}(1-\epsilon \const)}\big(\big(1+\dist_{g}(p_{i},\partial \Sigma_{\delta_{0}})\big)^{1- \epsilon \const} -1\big)\longrightarrow \infty
\end{align}
as wished. 
\end{proof}
The importance of Proposition \ref{PIV} roots in that the Ricci curvature of the metric $\tilde{g}=N^{-2\epsilon}g$ has the expression\footnote{Use for this that if $\tilde{g}=e^{2\phi}g$ then $\tilde{Ric}=Ric-(\nabla\nabla \phi -\nabla\phi\nabla\phi)-(\Delta\phi+|\nabla\phi|^{2})g$ and that $\tilde{\nabla}_{i}V_{j}=\nabla_{i}V_{j}-(V_{j}\nabla_{i}\phi+ V_{i}\nabla_{j}\phi-(V^{k}\nabla_{k} \phi) g_{ij})$.}
\be
\tilde{Ric}=-\tilde{\nabla}\tilde{\nabla} f +\frac{1}{c}\tilde{\nabla}f\tilde{\nabla} f
\ee
where $f$ and $c$ depend on $\epsilon$ and are given by
\be
f=-(1+\epsilon)\ln N,\quad\text{and}\quad  \frac{1}{c}=\frac{(1-2\epsilon-\epsilon^{2})}{(1+\epsilon)^{2}} 
\ee
In particular, if $0<\epsilon<\sqrt{2}-1$ then $c>0$. This means that $c$-Bakry-Emery Ricci tensor $\tilde{Ric}_{f}^{c}$ given by
\be
\tilde{Ric}_{f}^{c}=\tilde{Ric}+\tilde{\nabla}\tilde{\nabla} f -\frac{1}{c}\tilde{\nabla}f\tilde{\nabla} f,
\ee
is zero. We will use this fundamentally later.

\subsection{The Ball Covering Property.}\label{BCP} 
As observed in \cite{MR1809792}, Liu's {\it ball covering property} holds for (metrically complete) static solutions $(\Sigma;g)$ with compact boundary. Namely, for any $0<a<b$ there is $r_{0}$ and $n_{0}$ such that for any $r\geq r_{0}$ there is a set of balls $\{B(p_{i},ar/2), p_{i}\in \overline{\mathcal{A}}(ar,br)$, $i=1,\ldots,n_{r}\leq n_{0}\}$, covering $\overline{\mathcal{A}}(ar,br)$. Here and below $\overline{\mathcal{A}}$ is the closure of $\mathcal{A}$.

As a direct corollary we have that, for any $0<a<b$ and $r\geq r_{0}$, as in the ball covering property, any two points in the same connected component of $\mathcal{A}(ar,br)$ can be joined by a curve of length less or equal than $n_{0}ar$ entirely contained in $\mathcal{A}(ar/3,3br)$.  

Let $\mathcal{A}_{c}(ar,br)$ be a connected component of $\mathcal{A}(ar,br)$. By the curvature decay (\ref{CURVDEC}) we have $|\nabla N/N|\leq 3\eta/ar$ all over $\mathcal{A}_{c}(ar/3,3br)$. Integrating this inequality along curves as in the previous paragraph we obtain 
\be\label{HARN}
\frac{\max\{N(p): p\in \overline{\mathcal{A}}_{c}(ar,br)\}}{\min\{N(p):p\in \overline{\mathcal{A}}_{c}(ar,br)\}}\leq C(a,b)
\ee
This is a type of Harnack inequality for $N$ and is fundamental.

\begin{Remark}\label{LIUSV} It is not known at the moment if a similar ball covering property holds for strictly stationary solutions. This is a main obstacle to extend Theorem \ref{TII} to stationary isolated systems.
\end{Remark}

\subsection{Spacetime geodesics in static spacetimes.}\label{SPTG}
Let $(\Sigma; g, N)$ be a static vacuum data and let $(\stM,\stg)$ be its associated spacetime. We recall here a useful way to describe spacetime geodesics $\stgeo(\tau)$ in terms of certain metrics conformal to $g$ in $\Sigma$. This goes back at least to the work of H. Weyl \cite{Weyl} from 1917. 

Let $\sgeo=\Pi(\stgeo)$ be the projection of $\stgeo$ into $\Sigma$. Then it is direct to see that $\sgeo$ satisfies the equation
\be
\nabla_{\sgeo'}\sgeo'=a^{2}\frac{\nabla N}{N^{3}}
\ee
where $\gamma'=d\gamma/d\tau$ and where $a$ is the constant $a=\stg(\stgeo',\partial_{t})$. Moreover we have
\be
|\sgeo'|^{2}=\varepsilon+\frac{a^{2}}{N^{2}}
\ee
where the norm on the l.h.s is with respect to $g$ and where $\varepsilon=-1,0,1$ according to the character type of the geodesic.

Then define $e^{2\phi}$ by
\be
e^{2\phi}=(\varepsilon+\frac{a^{2}}{N^{2}})
\ee
wherever the right hand side is positive (this includes the projection of the geodesic). Finally consider the conformal metrics 
\be
\hat{\sg}=e^{2\phi}g,\qquad \check{\sg}=e^{-2\phi}g. 
\ee
and denote by $ds$, $d\hat{s}=e^{f}ds$, and $d\check{s}=e^{-f}ds$ the elements of length of $\sgeo$ with respect to $g, \hat{g}$ and $\check{g}$ respectively. 

In this setup we have the following characterisation: {\it If $\stgeo(\tau)$ is a spacetime geodesic then $\sgeo(\hat{s})$ is a geodesic of $\hat{g}$ and $d\tau=d\check{s}$. Conversely if $\sgeo(\hat{s})$ is a geodesic of $\hat{g}$ then the curve
\be
\stgeo(\check{s})=(\int^{\check{s}}\frac{a}{N(\sgeo(\check{s}'))}d\check{s}',\sgeo(\check{s}))\subset \mathbb{R}\times \sM = \stM
\ee
is a spacetime geodesic with $\stg(\stgeo',\stgeo')=\varepsilon$, hence with $\tau=\check{s}$.}

Two points are particularly important about this characterisation of spacetime geodesics, (i) spacetime geodesics can be constructed out of the projected curves which in turn can be easily found through length-minimisation, (ii) as the affine parameter of spacetime geodesics is the $\check{g}$-arc length of the projected curve, a way is opened to link spacetime geodesic completeness at infinity to the metric completeness of $\check{g}=N^{2}g$. We will exploit these two observations during the proof of Proposition \ref{PIII}. We will use only the characterisation of null geodesics, i.e. $\epsilon=0$, although other types of geodesics can be useful in similar contexts. 

\section{The proofs}\label{PROOF}

Every, smooth, connected, compact, boundaryless and orientable surface $F$ embedded in $\mathbb{R}^{3}$ divides $\mathbb{R}^{3}$ in two connected components. Below we will work with such surfaces $F$ embedded in $\mathbb{R}^{3}\setminus \overline{\mathbb{B}^{3}}$ and will denote by $M(F)$ to the closure of the bounded connected component of $(\mathbb{R}^{3}\setminus \mathbb{B}^{3})\setminus F$.  
Two facts are direct to check. First, for any disjoint $F_{1}$ and $F_{2}$ such that $\partial \mathbb{B}^{3}\subset M(F_{i})$ for $i=1,2$, then either $F_{1}\subset M^{\circ}(F_{2})$ or $F_{2}\subset M^{\circ}(F_{1})$, (here $^{\circ}=\text{Interior}$). Second, if a set $\{F_{i}, i=1,\ldots,n\geq 1\}$ of such surfaces is such that $\partial \mathbb{R}^{3}$ belongs to a bounded component of $\Sigma\setminus \bigcup_{i=1}^{i=n}F_{i}$ then there is at least one $F_{i}$ such that $\partial \mathbb{B}^{3}\subset M(F_{i})$. We will use these facts in the proof of the following proposition.

\begin{Proposition}\label{PI} Let $(\Sigma;g,N)$ be a metrically complete vacuum static data set with $\Sigma\approx \mathbb{R}^{3}\setminus \mathbb{B}^{3}$. Then, there is a set of (smooth, connected, compact, boundaryless and orientable) surfaces $\{S_{j}; j=0,1,2,3,\ldots\}$, such that the following holds for every $j$,
\begin{enumerate}
\item $S_{j}$ is embedded in $\An(2^{1+2j},2^{2+2j})$,
\item $\partial \Sigma\subset M(S_{j})$,
\item $M(S_{j})\subset M(S_{j+1})$,
\end{enumerate}
\end{Proposition}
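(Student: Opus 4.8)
The plan is to realise each $S_j$ as a single connected component of a smooth level set of a regularised distance function, and to force the nesting in item (3) out of the fact that consecutive annuli are separated by a dyadic gap. Throughout write $a_j=2^{1+2j}$ and $b_j=2^{2+2j}$, so that $\An(a_j,b_j)$ is exactly the annulus in item (1) and $b_j<a_{j+1}=2b_j$. Let $\rho(\cdot)=\dist(\cdot,\partial\Sigma)$; since $(\Sigma;g)$ is metrically complete, closed metric balls are compact (Hopf--Rinow), so $\rho$ is proper and its sublevel sets are bounded. As $\rho$ is $1$-Lipschitz I would mollify it to a smooth proper function $f$ with $|f-\rho|<\tfrac14$ on $\Sigma$, and by Sard's theorem pick a regular value $c_j$ in the window $(a_j+\tfrac14,\,b_j-\tfrac14)$, which is nonempty (its length is $a_j-\tfrac12\geq\tfrac32$). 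Then $\{f=c_j\}$ is a smooth compact surface contained in $\{|\rho-c_j|\leq\tfrac14\}\subset\An(a_j,b_j)$, while $\{f<c_j\}\subset\{\rho<b_j\}$ is bounded and contains $\partial\Sigma$ (where $f\leq\tfrac14$).

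Each connected component of $\{f=c_j\}$ is a smooth, connected, compact, boundaryless and orientable surface embedded in $\Sigma\subset\mathbb{R}^3$, hence divides $\mathbb{R}^3$ by the separation property recalled at the start of this section, and so is one of the surfaces $F$ considered before the statement. Since $\partial\mathbb{B}^3$ lies in the bounded set $\{f<c_j\}$, the second of the two facts noted above guarantees that at least one component $F$ satisfies $\partial\mathbb{B}^3\subset M(F)$; I set $S_j$ to be such a component. Being a single component it is connected, and by construction it gives items (1) and (2) at once.

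The remaining point, item (3), is the crux. Because $S_j\subset\{\rho<b_j\}$ and $S_{j+1}\subset\{\rho>a_{j+1}\}$ with $b_j<a_{j+1}$, the two surfaces are disjoint; as both enclose $\partial\mathbb{B}^3$, the first of the two facts gives the dichotomy $S_j\subset M^\circ(S_{j+1})$ or $S_{j+1}\subset M^\circ(S_j)$. To select the correct alternative I would argue by connecting inward. Fix $x\in S_j$ and let $\gamma$ be a minimising geodesic from $x$ to $\partial\Sigma$; along $\gamma$ the value of $\rho$ decreases monotonically from $\rho(x)<b_j$ down to $0$, so $\gamma\subset\{\rho<b_j\}$ and hence $f<b_j+\tfrac14$ on $\gamma$. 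Here the dyadic gap is exactly what is needed: $b_j+\tfrac14<a_{j+1}+\tfrac14<c_{j+1}$, so $\gamma\subset\{f<c_{j+1}\}$ and $\gamma$ never meets $S_{j+1}\subset\{f=c_{j+1}\}$. Since $\partial\Sigma\subset M^\circ(S_{j+1})$, the endpoint $x$ lies on the same side, i.e. $x\in M^\circ(S_{j+1})$; as $S_j$ is connected and disjoint from $S_{j+1}$, this forces $S_j\subset M^\circ(S_{j+1})$, ruling out the second alternative and yielding $M(S_j)\subset M(S_{j+1})$.

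I expect the genuine difficulty to be precisely this last step: converting the coarse metric separation of the two annuli into the topological nesting of the enclosing regions. The subtlety is that $M(S_j)$ could a priori contain ``pockets'' reaching large values of $\rho$, so that a direct distance bound on all of $M(S_j)$ is awkward to obtain; the minimising-geodesic argument sidesteps this by only ever travelling inward, where $\rho$ is automatically controlled, and it is the factor-two spacing together with the buffer between $\An(a_j,b_j)$ and $\An(a_{j+1},b_{j+1})$ that keeps this inward path strictly below the level $c_{j+1}$. The only other places where care is required are routine: the properness of $\rho$ and the mollification with Sard's theorem in the first paragraph, and the appeal to Jordan--Brouwer separation for the individual components in the second.
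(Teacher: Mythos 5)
Your proposal is correct, and the construction itself is essentially the paper's: a smooth surrogate for the distance function (the paper uses a bump function that is $1$ on $\{\dist(\cdot,\partial\Sigma)\leq 2^{1+2j}\}$ and $0$ on $\{\dist(\cdot,\partial\Sigma)\geq 2^{2+2j}\}$, you use a mollified distance function), Sard's theorem to get a regular level set inside the annulus, and the two separation facts to single out the component enclosing $\partial\Sigma$. The genuine divergence is in item (3), and there your argument is actually the more complete one. The paper disposes of the bad alternative in one line: if $M(S_{j+1})\subset M^{\circ}(S_{j})$ then $S_{j+1}\subset\{p:\dist(p,\partial\Sigma)<2^{2+2j}\}$, contradicting $S_{j+1}\subset\mathcal{A}(2^{3+2j},2^{4+2j})$. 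This tacitly assumes that $M^{\circ}(S_{j})$ contains no points at distance $\geq 2^{2+2j}$ from $\partial\Sigma$ --- precisely the ``no pockets'' issue you flag. That assumption is not automatic: for a general complete metric on $\mathbb{R}^{3}\setminus\mathbb{B}^{3}$ the distance to $\partial\Sigma$ can have interior local maxima, so $M(S_{j})$ may well contain regions of large $\dist(\cdot,\partial\Sigma)$ bounded by \emph{other} components of the same level set. Your inward minimising-geodesic argument closes this step correctly, since along such a curve $\dist(\cdot,\partial\Sigma)$ decreases monotonically and therefore the curve can never reach the level of $S_{j+1}$; this shows $S_{j}\subset M^{\circ}(S_{j+1})$ and hence the desired nesting. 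An alternative fix, closer to the paper's toolbox, is the ray argument the author states immediately \emph{after} the proposition: a ray $\gamma$ from $\partial\Sigma$ satisfies $\dist(\gamma(s),\partial\Sigma)=s$, so it meets $S_{j}$ only for $s\in(2^{1+2j},2^{2+2j})$ and lies outside $M(S_{j})$ for all larger $s$, yet it must cross $S_{j+1}$ at some $s>2^{3+2j}$, which rules out $S_{j+1}\subset M^{\circ}(S_{j})$. Either way, the conclusion stands; your write-up supplies a justification at the one point where the paper's proof is only implicit.
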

\vs

The surfaces $S_{i}$ will be used only as references inside the manifold $\Sigma$, their geometries play no role. Observe that $\Sigma\setminus M(S_{k})=\bigcup_{j=k}^{j=\infty} M(S_{j+1})\setminus M(S_{j})$ with the union disjoint and that $S_{j+1}\cup S_{j}=\partial (M(S_{j+1})\setminus M^{\circ}(S_{j}))$. This last observation will be used when we apply the maximum principle to $N$ on $M(S_{j+1})\setminus M^{\circ}(S_{j})$.

\begin{proof}[\bf Proof.] 
In the argument that follows we treat $\Sigma$ and $\mathbb{R}^{3}\setminus \mathbb{B}^{3}$ indistinctly. The construction of the surfaces $S_{j},\ j=0,1,2,\ldots$ is as follows. Let $f:\Sigma\rightarrow [0,\infty)$ be a (any) smooth function such that $f\equiv 1$ on $\{p:\dist(p,\partial \Sigma)\leq 2^{1+2j}\}$ and $f\equiv 0$ on $\{p: \dist(p,\partial \Sigma)\geq 2^{2+2j}\}$. Let $x$ be any regular value of $f$ in $(0,1)$. Then we can write $f^{-1}(x)=F_{1}\cup\ldots\cup F_{n}$ where each $F_{i}$ is a (connected, compact, boundaryless and orientable) surface embedded in $\mathcal{A}(2^{1+2j},2^{2+2j})$. 
Now, as $\Sigma$ is the disjoint union of the sets $f^{-1}((x,\infty))$, $f^{-1}(x)=\bigcup_{i=1}^{i=\infty}F_{i}$ and $f^{-1}((-\infty,x))$, and as $\{p: \dist(p,\partial \Sigma)\geq 2^{2+2j}\}\subset f^{-1}((-\infty,x))$ we conclude that $\partial \Sigma$, which lies inside $f^{-1}((x,\infty))$, must belong to a bounded component of $\Sigma\setminus \bigcup_{i=1}^{i=n}F_{i}$. Hence $\partial \Sigma\subset M(F_{i*})$ for some $F_{i*}$, (see the beginning of this section). We set $S_{j}=F_{i*}$.

We verify now that the surfaces $S_{j}$ satisfy the properties 1-3. By construction the $S_{j}$'s satisfy already 1 and 2. Now, either $M(S_{j})\subset M^{\circ}(S_{j+1})$ or $M(S_{j+1})\subset M^{\circ}(S_{j})$. If $M(S_{j+1})\subset M^{\circ}(S_{j})$ then $S_{j+1}\subset \{p:\dist(p,\partial \Sigma)<2^{2+2j}\}$ which is impossible because $S_{j+1}\subset \mathcal{A}(2^{3+2j},2^{4+2j})$. Thus, $M(S_{j})\subset M^{\circ}(S_{j+1})$, showing property 3.
\end{proof}

We claim that, for any $j\geq 0$, the surfaces $S_{j+1}$ and $S_{j}$ lie in the same connected component of the annuli $\mathcal{A}(2^{1+2j},2^{4+2j})$. To see this, consider a ray $\gamma(s)$, $s\geq 0$, starting at $\partial\Sigma$ at $s=0$, (i.e. $\dist(\gamma(s),\partial \Sigma)=s$ for all $s\geq 0$; $s$ is arc-length). Let $s_{j}$ be the last time that $\gamma(s)\in S_{j}$ and let $s_{j+1}$ be the first time that $\gamma(s)\in S_{j+1}$. Then, $s_{j}\geq 2^{1+2j}$ because $S_{j}\subset \mathcal{A}(2^{1+2j},2^{2+2j})$ and $s_{j+1}\leq 2^{4+2j}$ because $S_{j+1}\subset \mathcal{A}(2^{3+2j},2^{4+2j})$. Hence the arc $\{\gamma(s): s\in [s_{j},s_{j+1}]\}$ must lie inside $\mathcal{A}(2^{1+2j},2^{4+2j})$ because $\dist(\gamma(s),\partial \Sigma)=s$ for all $s$. We conclude then that $S_{j}$ and $S_{j+1}$ must lie in the same connected component of $\mathcal{A}(2^{1+2j},2^{4+2j})$.       

This claim and Proposition \ref{PI} will be used in the proof of the following proposition. 

\begin{Proposition}\label{PII} Let $(\Sigma;g,N)$ be a metrically complete vacuum static data set with $\Sigma\approx \mathbb{R}^{3}\setminus \mathbb{B}^{3}$ and $N>0$. Then, one of the following holds,
\begin{enumerate}
\item $N$ converges uniformly to zero over the end of $\Sigma$,
\item $N$ converges uniformly to infinity over the end of $\Sigma$,
\item $C_{1}<N<C_{2}$ for constants $0<C_{1}<C_{2}<\infty$.
\end{enumerate}
\end{Proposition}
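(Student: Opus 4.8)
The plan is to track the two sequences of surface extrema $M_j:=\max_{S_j}N$ and $m_j:=\min_{S_j}N$ attached to the reference surfaces $S_j$ of Proposition \ref{PI}, and to show that each converges, with the outcome dictated by the Harnack inequality (\ref{HARN}). First I would record the two tools I will use repeatedly. Since the data is vacuum static, $N$ is harmonic, $\Delta N=0$, so the maximum principle applies on every compact region. By Proposition \ref{PI} the sets $M(S_j)$ are nested, $M(S_j)\subset M^{\circ}(S_{j+1})$, and the end of $\Sigma$ is the disjoint union of the compact regions $R_j:=M(S_{j+1})\setminus M^{\circ}(S_j)$, each with boundary $S_j\cup S_{j+1}$. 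Applying the maximum principle on $R_j$ gives the pinching
\[
\min(m_j,m_{j+1})\le N\le \max(M_j,M_{j+1})\quad\text{on }R_j,
\]
so that controlling the two sequences controls $N$ uniformly over the whole end.

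The heart of the argument is to show that $\lim_j M_j$ and $\lim_j m_j$ exist in $[0,\infty]$. For any $j<l<k$ the region $M(S_k)\setminus M^{\circ}(S_j)$ is compact with boundary $S_j\cup S_k$ and contains $S_l$; the maximum and minimum principles therefore give
\[
m_l\ge \min(m_j,m_k)\qquad\text{and}\qquad M_l\le \max(M_j,M_k).
\]
I read the second inequality as forbidding an interior ``peak'' of $(M_j)$ and the first as forbidding an interior ``valley'' of $(m_j)$. Concretely, if $\liminf_j m_j<\limsup_j m_j$ one could fix levels $u<w$ strictly between them and, using that $m_j>w$ and $m_j<u$ each occur for infinitely many $j$, select $j<l<k$ with $m_j>w$, $m_l<u$, $m_k>w$, contradicting $m_l\ge\min(m_j,m_k)$. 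Hence $\underline m:=\lim_j m_j$ exists in $[0,\infty]$, and the symmetric argument applied to the ``no peak'' inequality gives $\overline M:=\lim_j M_j$ in $[0,\infty]$. I expect this to be the main obstacle: the Harnack inequality by itself bounds only the \emph{multiplicative increments} of $(M_j)$ and $(m_j)$ between neighbouring surfaces (through the claim that $S_j$ and $S_{j+1}$ sit in a common annular component), which is perfectly compatible with indefinite oscillation; it is the two-sided maximum principle \emph{across far-apart} surfaces that rules out oscillation and forces genuine convergence.

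Finally I would combine the two limits with the Harnack inequality. Each $S_j$ is connected and lies in a single component of $\An(2^{1+2j},2^{2+2j})$, so (\ref{HARN}) with $a=2$, $b=4$, $r=2^{2j}$ yields a universal constant $C$ with $M_j\le C\,m_j$ for all large $j$; letting $j\to\infty$ gives $\underline m\le \overline M\le C\,\underline m$. The trichotomy is then read off from the value of $\underline m$. If $\underline m=0$ then $\overline M=0$, so by the pinching $N\to0$ uniformly on the end (case 1). If $\underline m=\infty$ then $\overline M=\infty$ and the pinching forces $N\to\infty$ uniformly (case 2). If $0<\underline m<\infty$ then $0<\underline m\le\overline M\le C\,\underline m<\infty$, so $(m_j)$ and $(M_j)$ are bounded away from $0$ and $\infty$; the pinching then bounds $N$ on the end between $\inf_j m_j>0$ and $\sup_j M_j<\infty$, and since $N$ is continuous and strictly positive on the compact inner piece $M(S_0)$ one obtains $C_1<N<C_2$ on all of $\Sigma$ (case 3). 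The only routine point beyond the convergence step is to absorb the finitely many small-$j$ surfaces for which the ball-covering threshold in (\ref{HARN}) may fail; these do not affect the limits.
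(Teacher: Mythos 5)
Your proof is correct, and it reaches the trichotomy by a genuinely different route, even though it draws on the same two tools (the maximum principle on the compact regions bounded by the $S_{j}$'s, and the Harnack inequality (\ref{HARN})). The paper argues by cases on divergent sequences: assuming $N(p_{i})\to 0$ along a divergent sequence, it places $p_{i}$ in the slab between $S_{j_{i}-1}$ and $S_{j_{i}}$, bounds $\min\{N;S_{j_{i}}\cup S_{j_{i}-1}\}$ by $N(p_{i})$ via the maximum principle, and then applies (\ref{HARN}) \emph{across the pair} $S_{j_{i}},S_{j_{i}-1}$ to conclude $\max\{N;S_{j_{i}}\}\lesssim N(p_{i})\to 0$; a final application of the maximum principle on $\Sigma\setminus M^{\circ}(S_{j_{i}})$ propagates this to uniform decay over the end (and symmetrically for divergence to infinity). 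That cross-surface use of Harnack requires the claim, proved by the ray argument after Proposition \ref{PI}, that $S_{j}$ and $S_{j-1}$ lie in a common connected component of the larger annulus $\An(2^{2j-1},2^{2j+2})$. Your argument dispenses with that claim within this proposition: the no-peak/no-valley inequalities $M_{l}\le\max(M_{j},M_{k})$ and $m_{l}\ge\min(m_{j},m_{k})$ — pure weak maximum principle across far-apart surfaces — already force $M_{j}$ and $m_{j}$ to converge in $[0,\infty]$, and Harnack is then invoked only on each \emph{single} connected surface $S_{j}$, which automatically sits in one component of $\An(2^{1+2j},2^{2+2j})$. The trade-off: your convergence lemma is a clean, self-contained replacement for the paper's propagation step and leans less on the annulus-connectivity structure, whereas the paper's arrangement costs nothing globally (the connectivity claim is needed again in Proposition \ref{PIII} anyway) and delivers the uniform statements directly from an arbitrary divergent sequence. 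Both proofs are complete; the bookkeeping points you flag (the finitely many $j$ below the ball-covering threshold, and positivity of continuous $N$ on the compact core in case 3) are handled exactly as you say.
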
 

\begin{proof}[\bf Proof.] To shorten notation we will write $\max\{N;\Omega\}:=\max\{N(p):p\in \Omega\}$ where $\Omega$ are compact sets (same notation for $\min\{N;\Omega\}$).

Suppose that there is a divergent sequence $p_{i}$ for which $N(p_{i})\rightarrow 0$ as $i\rightarrow \infty$. We claim that, in this case, $N$ tends uniformly to zero over the end. 

For every $i$ let $j_{i}$ be such that $p_{i}\in M(S_{j_{i}})\setminus M^{\circ}(S_{j_{i}-1})$. Suppose first that
\be\label{DECA}
\max\{N;S_{j_{i}}\}\rightarrow 0
\ee
Then, for any $i'>i$ the maximum principle gives
\be
\max\{N; M(S_{j_{i'}})\setminus M^{\circ}(S_{j_{i}})\}\leq \max\big\{\max\{N; S_{j_{i'}}\}, \max\{N; S_{j_{i}}\}\big\}
\ee
Letting $i'\rightarrow \infty$ and using (\ref{DECA}) we obtain
\be
\sup\{N(p):p\in \Sigma\setminus M^{\circ}(S_{j_{i}})\}\leq \max\{N;S_{j_{i}}\}
\ee
where the r.h.s tends to zero as $i$ tends to infinity. This proves that $N$ tends uniformly to zero as claimed. 

To prove (\ref{DECA}) we recall first that $S_{j_{i}}$ and $S_{j_{i}-1}$ lie in the same connected component of $\mathcal{A}(2^{2j-1},2^{2j+2})$. Therefore, as commented in Section \ref{BCP}, we have
\be\label{A}
\max\{N;S_{j_{i}}\}\leq c\min\{N;S_{j_{i}}\cup S_{j_{i}-1}\}
\ee
where the constant $c$ is independent of $i$. On the other hand, by the maximum principle we have
\be\label{B}
\min\{N;S_{j_{i}}\cup S_{j_{i}-1}\}\leq \min\{N;M(S_{j_{i}})\setminus M^{\circ}(S_{j_{i}-1})\}\leq N(p_{i})
\ee
Combining (\ref{A}) and (\ref{B}) we obtain
\be\label{C}
\max\{N;S_{j_{i}}\}\leq N(p_{i})
\ee
where the r.h.s tends to zero. This implies (\ref{DECA}) as desired.

In the same manner one proves that if there is a divergent sequence $p_{i}$ such that $N(p_{i})\rightarrow \infty$ as $i\rightarrow \infty$ then $N$ tends uniformly to infinity over the end. 

If none of the situations considered above occurs then $0<C_{1}<N<C_{2}$ for constants $C_{1}, C_{2}$. 
\end{proof} 

To show asymptotic flatness for isolated systems using \cite{MR3233266}, \cite{MR3233267}, we need only to prove the completeness of $N^{2}g$ using that the static spacetime is geodesically complete at infinity. This is done in the next proposition.

\begin{Proposition}\label{PIII} Let $(\Sigma;g,N)$ be a static vacuum data set, with $\Sigma\approx \mathbb{R}^{3}\setminus \mathbb{B}^{3}$ and $N>0$ on $\Sigma$. Assume that the associated spacetime 
\be
\stM=\mathbb{R}\times \Sigma,\qquad \stg=-N^{2}dt^{2}+g
\ee
is geodesically complete at infinity. Then the space $(\Sigma; N^{2}g)$ is metrically complete.
\end{Proposition}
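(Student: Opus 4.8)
The plan is to reduce to the three alternatives of Proposition \ref{PII} and to handle the only delicate case through the null-geodesic correspondence of Section \ref{SPTG}. First I would record that $(\Sigma;g)$ is itself metrically complete: a divergent $g$-geodesic is the projection of the spacetime geodesic perpendicular to $\partial_t$ (the case $a=0$, $\varepsilon=1$), and this projection leaves every compact set, so geodesic completeness at infinity forces the spacetime geodesic, hence the $g$-geodesic, to be complete; together with $\partial\Sigma$ lying at finite $g$-distance this gives metric completeness and, in particular, lets me invoke Proposition \ref{PII}. Write $\check g=N^{2}g$ and $\hat g=N^{-2}g$ (the conformal metrics of Section \ref{SPTG} for null geodesics, with $a=1$). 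By Proposition \ref{PII} either (3) $C_{1}<N<C_{2}$, where $\check g$ is bi-Lipschitz to the complete metric $g$ and we are done; or (2) $N\to\infty$, where $N\ge c>0$ globally and $\check g\ge c^{2}g$, so again $\check g$ is complete; or (1) $N\to 0$ uniformly on the end. Only case (1) is genuine, since there $\check g$ shrinks at infinity and completeness is in question.

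In case (1) I would first note that, $N$ being bounded above, $\hat g=N^{-2}g\ge C_{2}^{-2}g$ is metrically complete. Since $\Sigma\approx\mathbb{R}^{3}\setminus\mathbb{B}^{3}$ has a single noncompact end, Hopf--Rinow produces, for a divergent sequence $q_{i}$, $\hat g$-minimizing geodesics from $\partial\Sigma$ to $q_{i}$ and a limiting $\hat g$-ray $\gamma:[0,\infty)\to\Sigma$ minimizing on each segment and leaving every compact set. By the correspondence of Section \ref{SPTG}, $\gamma$ (parametrized by $\hat g$-arclength) is the projection of a spacetime null geodesic whose affine parameter equals the $\check g$-arclength $\check s$ of $\gamma$. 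As the projection $\gamma$ leaves every compact set, geodesic completeness at infinity forces this null geodesic to be complete, i.e. $\check s\to\infty$ along $\gamma$: the ray $\gamma$ has infinite $\check g$-length.

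It then remains to upgrade ``$\gamma$ has infinite $\check g$-length'' to metric completeness of $\check g$, i.e. to the statement that every divergent curve has infinite $\check g$-length. Here I would use the Harnack inequality (\ref{HARN}) together with the curvature decay (\ref{CURVDEC}): on each connected annular piece $\An_{c}(2^{k},2^{k+1})$ the lapse oscillates by a bounded factor, so the radial profile $N(r)$, with $r=\dist_{g}(\cdot,\partial\Sigma)$, is well defined up to a uniform multiplicative constant. Consequently $\hat g$ and $\check g$ are, up to bounded factors and a radial reparametrization, the radial metrics $N^{-2}(dr^{2}+\cdots)$ and $N^{2}(dr^{2}+\cdots)$; the $\hat g$-minimizer $\gamma$ is therefore essentially radial, and both the $\check g$-length of $\gamma$ out to $\{r=R\}$ and the distance $\dist_{\check g}(\partial\Sigma,\{r=R\})$ are comparable to $\int_{0}^{R}N\,dr$. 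Since $\gamma$ has infinite $\check g$-length this forces $\int^{\infty}N\,dr=\infty$, whence the $\check g$-distance from $\partial\Sigma$ to infinity is infinite and $(\Sigma;N^{2}g)$ is metrically complete.

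The main obstacle is precisely this last comparison: the null-geodesic correspondence only controls $\hat g$-geodesics and their $\check g$-length, whereas $\check g$-completeness concerns $\check g$-shortest paths, and the two conformal factors differ by $N^{\pm 4}\to\infty$. What rescues the argument is that (\ref{HARN}) makes $N$ essentially a function of $r$ alone, so ``radial'' is simultaneously efficient for $\hat g$ and $\check g$ and the mismatch collapses to the single scalar criterion $\int^{\infty}N\,dr=\infty$. Care is needed to make ``essentially radial'' precise when the metric annuli are disconnected, but the surfaces $S_{j}$ of Proposition \ref{PI} and the ball-covering property of Section \ref{BCP} organize the end into a controlled chain of annular components on which (\ref{HARN}) applies with uniform constants.
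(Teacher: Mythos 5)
Your overall skeleton coincides with the paper's: establish metric completeness of $(\Sigma;g)$ from geodesic completeness at infinity, reduce via Proposition \ref{PII} to the case $N\to 0$ uniformly, take a ray $\Gamma$ for $\hat g=N^{-2}g$ emanating from $\partial\Sigma$, lift it to a spacetime null geodesic whose affine parameter is the $\check g=N^{2}g$ arc length, and convert geodesic completeness at infinity into a two-sided comparison between $\check g$-lengths of curves and a ``radial'' quantity. The paper runs this by contradiction (incompleteness of $N^{2}g$ yields cheap curves, hence $\sum_{j}\max\{N;S_{j}\}2^{2j}<\infty$, i.e.\ (\ref{TOPR}), hence the ray has finite $\check g$-length, contradicting completeness of the lifted null geodesic), but that is just the contrapositive of your arrangement; the content is the same.

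The genuine gap is in your final step, which is precisely the one carrying all the weight. You claim that the Harnack inequality (\ref{HARN}) together with the curvature decay (\ref{CURVDEC}) makes $N$ ``essentially a function of $r=\dist_{g}(\cdot,\partial\Sigma)$ alone,'' so that radial curves are simultaneously efficient for $\hat g$ and $\check g$ and everything collapses to the criterion $\int^{\infty}N\,dr=\infty$. The inequality (\ref{HARN}) cannot deliver this: it controls the oscillation of $N$ only within a single connected component of a metric annulus, whereas the regions an arbitrary curve must traverse, namely $M(S_{j+1})\setminus M^{\circ}(S_{j})$, are not contained in any such component. They may contain points at arbitrarily large $g$-distance from $\partial\Sigma$ (bounded ``far'' pockets enclosed by $S_{j+1}$) as well as points lying in annulus components disjoint from the one containing $S_{j}\cup S_{j+1}$, and on all of these (\ref{HARN}) says nothing; so ``$N$ is radial up to uniform constants'' is unjustified and, absent further argument, fails. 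The ingredient that rescues the comparison --- and which your proposal never invokes --- is the maximum principle for the harmonic function $N$: since $\partial\big(M(S_{j+1})\setminus M^{\circ}(S_{j})\big)=S_{j}\cup S_{j+1}$, the minimum and maximum of $N$ over the whole region are attained on $S_{j}\cup S_{j+1}$, and only those boundary values are then controlled by (\ref{HARN}). This is exactly how the paper obtains both the lower bound (\ref{NINT}) (any curve crossing the $j$-th shell has $\check g$-length $\geq c_{1}\max\{N;S_{j}\}2^{2j}$, your inequality (ii)) and the upper bound $\sup\{N;\Sigma\setminus M^{\circ}(S_{j})\}\leq\max\{N;S_{j}\}$ used in (\ref{PRO}). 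Similarly, your assertion that the $\hat g$-ray is ``essentially radial'' is neither needed nor proved; what one can actually show --- the paper's estimate (\ref{GGG}) --- is only that $\length(\Gamma_{j})\leq c\,2^{2j}$, and its proof again combines the ray's minimizing property against competitor curves supplied by the ball covering of Section \ref{BCP} with the maximum principle and (\ref{HARN}). With the maximum principle inserted at these points your argument closes (and becomes the paper's proof read backwards); without it the central comparison is unsupported, so as written the proposal has a hole.
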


\begin{proof}[\bf Proof.]
The proof is by contradiction. So let us assume that $(\Sigma; N^{2}g)$ is not metrically complete. We will explain later how this contradicts  the {\it geodesic completeness at infinity}. During the proof we use the same notation as in Proposition \ref{PII}.
We will also use, as was explained in Section \ref{BMW}, that under the hypothesis of the proposition, the space $(\Sigma;g)$ is metrically complete. 

We begin by proving that 
\be\label{TOPR}
\sum_{j=1}^{j=\infty} \max\{N;S_{j}\}2^{2j}<\infty
\ee
Let $\beta:[s_{j},s_{j+1}]\rightarrow M(S_{j+1})\setminus M^{\circ}(S_{j})$ be any curve with $\beta(s_{j})\in S_{j}$ and $\beta(s_{j+1})\in S_{j+1}$. We claim that then
\be\label{NINT}
\int_{s_{j}}^{s_{j+1}}N(\beta(s))ds\geq c_{1}\max\{N;S_{j}\}2^{2j} 
\ee
where the constant $c_{1}$ is independent of $j$. To see this write
\be
\int_{s_{j}}^{s_{j+1}}N(\beta(s))ds\geq \min\{N;M(S_{j+1})\setminus M^{\circ}(S_{j})\} \length(\beta)
\ee
and note that, 
\begin{enumerate}
\item $\length(\beta)\geq (2^{3+2j}- 2^{1+2j})=6\,2^{2j}$, because $S_{j}\subset \mathcal{A}(2^{1+2j},2^{2+2j})$ and $S_{j+1}\subset \mathcal{A}(2^{3+2j},2^{4+2j})$, and,
\item $\min\{N;M(S_{j+1})\setminus M^{\circ}(S_{j})\}\geq \max\{N;S_{j}\}$, because 
\be
\min\{N;M(S_{j+1})\setminus M^{\circ}(S_{j})\}\geq \min\{N;S_{j+1}\cup S_{j}\}
\ee
by the maximum principle, and because 
\be
\min\{N;S_{j+1}\cup S_{j}\}\geq c_{2}\max\{N;S_{j}\},
\ee
where $c_{2}$ is independent of $j$, by what was explained in Section \ref{BCP}, (see also the remark after the proof of Prop. \ref{PI}). 
\end{enumerate}
The formula (\ref{NINT}) is then obtained making $c_{1}=6c_{2}$. 

Now, if $(\Sigma; N^{2}g)$ is not metrically complete, then one can find a sequence of points $p_{i}$, with $\dist_{g}(p_{i},\partial \Sigma)\rightarrow \infty$ but with $\dist_{N^{2}g}(p_{i},\partial \Sigma)$ uniformly bounded. From the definition of $\dist$, this implies that there is a sequence of curves $\alpha_{i}(s);\ s\in [0,s_{i}]$ starting at $\partial \Sigma$ and ending at $p_{i}$, for which
\be
\int_{s=0}^{s=s_{i}}N(\alpha(s))ds\leq K<\infty
\ee
where $K$ is independent of $j$. For every $i$ let $j_{i}$ be the greatest $j$ such that $p_{i}\notin M(S_{j})$. Then, for every $j\leq j_{i}-1$ one can find an interval $[s_{j,i},s_{j+1,i}]$ such that the curve $\beta_{j}$ defined by $\beta_{j}(s)=\alpha_{i}(s)$, $s\in [s_{j,i},s_{j+1,i}]$, has range in $M(S_{j+1})\setminus M^{\circ}(S_{j})$ and moreover with $\beta_{j}(s_{j,i})\in S_{j}$ and $\beta_{j}(s_{j+1,i})\in S_{j+1}$. Using 
(\ref{NINT}) we write
\be
K\geq \int_{s=0}^{s=s_{i}}N(\alpha)ds\geq \sum_{j=1}^{j=j_{i}-1} \int_{s_{j,i}}^{s_{j+1,i}}N(\beta_{j})ds\geq \sum_{j=1}^{j=j_{i}-1}c_{1}\max\{N;S_{j}\} 2^{2j}
\ee  
Taking the limit $i\rightarrow \infty$ gives (\ref{TOPR}) as wished.

We proceed now with the proof. By Proposition \ref{PII} we know that $N$ must go uniformly to zero at infinity otherwise $N$ would be bounded from below away from zero and the metric $N^{2}g$ would be automatically complete. If $N\rightarrow 0$ uniformly at infinity, then $(\Sigma;N^{-2}g$) is metrically complete. 

As was explained in Section \ref{SPTG}, null-spacetime geodesics project into $(N^{-2}g)$-geodesics and the affine parameter is the $(N^{2}g)$-arc length. We will see below that if $(\Sigma; N^{2}g)$ is not metrically complete then there is an infinite $(N^{-2}g)$-geodesic whose $(N^{2}g)$-length is finite. This would be against the hypothesis that the spacetime is geodesically complete at infinity and the proof will be finished. 

Let $\Gamma(s)$, $s\geq 0$ be a ray for the metric to $N^{-2}g$ and starting at $\partial \Sigma$. For each $j\geq 1$ let $s_{j}$ be the last time that $\Gamma(s)\in S_{j}$. Let $\Gamma_{j}$ be the restriction of $\Gamma$ to $[s_{j},s_{j+1}]$. Then $\Gamma_{j}\subset (\Sigma\setminus M^{\circ}(S_{j}))$ and $\Gamma$ is the concatenation of the curves $\Gamma_{j}$, $j\geq 1$. Now,
\be\label{PRO}
\int_{s=s_{1}}^{s=\infty}N(\Gamma(s))ds=\sum_{j=1}^{j=\infty} \int_{s_{j}}^{s_{j+1}}N(\Gamma_{j}(s))ds\leq \sum_{j=1}^{j=\infty}
\max\{N;S_{j}\}\length(\Gamma_{j})
\ee
where to obtain the inequality we use that,  
\be
\sup\{N(\Gamma_{j}(s)): s\in [s_{j},s_{j+1}]\}\leq \sup\{N(p):p\in \Sigma\setminus M^{\circ}(S_{j})\}\leq \max\{N;S_{j}\}.
\ee 
which is obtained from the inclusion $\Gamma_{j}\subset(\Sigma\setminus M^{\circ}(S_{j}))$ (for the first inequality), and from the maximum principle (for the second). Thus, if we prove that for a constant $c_{3}$ independent of $j$ we have 
\be\label{GGG}
\length(\Gamma_{j})\leq c_{3}2^{2j}
\ee
then we can use (\ref{TOPR}) in conjunction to (\ref{PRO}) to conclude that 
\be
\int N(\Gamma(s))ds<\infty
\ee
which would imply that there is an incomplete null geodesic in the spacetime.

Let us prove then the inequality (\ref{GGG}). We will play with the fact that $\Gamma$ is a ray for $N^{-2}g$.

First note
\be\label{LETFIN}
\int_{s_{j}}^{s_{j+1}}\frac{1}{N(\Gamma_{j}(s))}ds\geq \frac{\length(\Gamma_{j})}{\max\{N;\Gamma_{j}\}}\geq \frac{\length(\Gamma_{j})}{\max\{N;S_{j}\}}
\ee
where the second inequality is obtained from the inclusion $\Gamma_{j}\subset \Sigma\setminus M^{\circ}(S_{j})$ and because $\max\{N;\Sigma\setminus M(S_{j})\}\leq \max\{N;S_{j}\}$ by the maximum principle.

Then recall from the discussion after Proposition \ref{PI}, that $S_{j}$ and $S_{j+1}$ lie in the same connected component $\mathcal{A}_{c}(2^{1+2j},2^{4+2j})$ of $\mathcal{A}(2^{1+2j},2^{4+2j})$. Hence, $\Gamma(s_{j})(\in S_{j})$ and $\Gamma(s_{j+1})(\in S_{j+1})$ lie also in $\mathcal{A}_{c}(2^{1+2j},2^{4+2j})$. Then, as in Section \ref{BCP}, we can joint $\Gamma(s_{j})$ to $\Gamma(s_{j+1})$ through a curve $\Gamma'_{j}$ of length less or equal than $c2^{2j}$, ($c$ is a constant independent of $j$), entirely contained in a connected component $\mathcal{A}_{c}(2^{1+2j}/3,32^{4+2j})$ of $\mathcal{A}(2^{1+2j}/3,3\,2^{4+2j})$. 
This curve $\Gamma_{j}'$ must have $(N^{-2}g$)-length greater or equal than the $(N^{-2}g$)-length of $\Gamma_{j}$ because $\Gamma_{j}$, (being a ray), minimises the  $(N^{-2}g$)-length between any two of its points. Thus we can the write
\be
\int_{s_{j}}^{s_{j+1}}\frac{1}{N(\Gamma(s))}ds\leq \int_{s'_{j}}^{s'_{j+1}}\frac{1}{N(\Gamma_{j}'(s'))}ds'\leq \frac{c2^{2j}}{\min\{N;\overline{\mathcal{A}}_{c}(2^{1+2j}/3,3\,2^{4+2j})\}}
\ee
Together with (\ref{LETFIN}) we obtain
\be
\length(\Gamma_{j})\leq c\bigg[\frac{\max\{N;\Gamma_{j}\}}{\min\{N;\overline{\mathcal{A}}_{c}(2^{1+2j}/3,32^{4+2j})\}}\bigg]\ 2^{2j}
\ee
But from (\ref{HARN}) we have
\be
\frac{\max\{N;\Gamma_{j}\}}{\min\{N;\overline{\mathcal{A}}_{c}(2^{1+2j}/3,32^{4+2j})\}}\leq \frac{\max\{N;\overline{\mathcal{A}}_{c}(2^{1+2j}/3,32^{4+2j})\}}{\min\{N;\overline{\mathcal{A}}_{c}(2^{1+2j}/3,32^{4+2j})\}}\leq c'
\ee
where $c'$ is independent of $j$. Thus, (\ref{GGG}) follows.
\end{proof}

\begin{proof}[\bf Proof of Theorem \ref{TII}.] From the same definition of static isolated system, we know that the spacetime outside a set  (invariant under the Killing field) is 
\be
\stM=\mathbb{R}\times (\mathbb{R}^{3}\setminus \mathbb{B}^{3}),\qquad \stg=-N^{2}dt^{2}+g
\ee
which is described by the data $(\mathbb{R}^{3}\setminus \mathbb{B}^{3}; g,N)$. As the spacetime is geodesically complete at infinity  we can use Proposition \ref{PIII} to deduce that the metric $N^{2}g$ is complete on $\mathbb{R}^{3}\setminus \mathbb{B}^{3}$. Theorem 1.3 in \cite{MR3233266} then apples and asymptotic flatness follows. 

(Remark: The notion of Isolated System used in \cite{MR3233266} is the same as in this paper but with the extra assumption that $N$ is bounded from below away from zero outside a compact set. As commented in \cite{MR3233266}, Theorem 1.3 still holds if this hypothesis on $N$ is replaced by the metric completeness of $N^{2}g$.) 
\end{proof}

\begin{Remark} If the matter model, (which is always assumed compactly supported), satisfies the weak energy condition then the conclusions of Theorem \ref{TII} can be seen to follow only from the metric completeness of the static data. The geodesic completeness at infinity is unnecessary.
\end{Remark}

We can now prove Theorem \ref{TI}.

\begin{proof}[\bf Proof of Theorem \ref{TI}.]
Suppose that a connected component of the complement of a compact set in $\Sigma$ is diffeomorphic to $\mathbb{R}^{3}$ minus a closed ball. Then, as in the proof of Theorem \ref{TII}, this component has to be an AF end of $\Sigma$. If we prove that $\Sigma$ has only one end, then the Main Theorem  in \cite{MR1201655} shows that $\Sigma$ is diffeomoprhic to $\mathbb{R}^{3}$ minus a finite set of open balls. The Israel - Robinson - Bunting - Masood-ul-Alam uniqueness Theorem then applies and the solution is Schwarzschild.  Let us prove then that $\Sigma$ must have only one end.

We will proceed by contradiction. Assume then that $\Sigma$ has more than one end. From now on we work in a space $(\Sigma_{\delta},N^{-2\epsilon}g)$ as in Proposition \ref{PIV} but with $\epsilon<\sqrt{2}-1$. 

The end that was AF (and had Schwarzschildian fall off) for $g$ is also AF for $N^{-2\epsilon}g$. On this end consider large (``almost round'') embedded spheres $S$. On them we have $|\nabla N|_{N^{2-\epsilon}g}\lesssim 1/{\rm area}(S)$ while for the mean curvature $\theta_{S}$, (with respect to the outward unit normal $n$), we have $\theta_{S}\approx 2\sqrt{4\pi/{\rm area}(S)}$. Hence one can clearly take an embedded sphere $S$ sufficiently far away that 
\be\label{LHS}
\theta_{S}-(1+\epsilon)\frac{n(N)}{N}>0
\ee
at every point of $S$. We work with such $S$ below. The particular combination (\ref{LHS}) will be relevant. The sphere $S$ divides $\Sigma_{\delta}$ in two connected components. Denote by $\Sigma'_{\delta}$ the closure of the connected component of $\Sigma_{\delta}\setminus S$ containing $\partial \Sigma$. We have $\partial \Sigma'_{\delta}=\partial \Sigma \cup S$ and, more importantly, $\Sigma'_{\delta}$ contains at least one more end. Since $\partial \Sigma_{\delta}$ is strictly convex, we can construct a geodesic ray $\gamma(s)$, $s\geq 0$, in $\Sigma'_{\delta}\setminus \partial \Sigma$ and with the following properties, 
\begin{enumerate}
\item $\gamma(s)$ starts at $S$ and perpendicularly to it,
\item $\gamma(s)$ diverges through and end in $\Sigma'_{\delta}$ as $s\rightarrow \infty$,
\item $\dist_{N^{-2\epsilon}g}(\gamma(s),S)=s$ for all $s\geq 0$.     
\end{enumerate}
These properties imply that the expansion $\theta(s)$, along the geodesic $\gamma(s)$, of the congruence of geodesics emanating perpendicularly to $S$, must remain finite for all $s$ (i.e. $\theta(s)>-\infty$ for all $s\geq 0$). If not then there is a focal point on $\gamma$ after which property 3 fails. We will prove now that indeed $\theta(s)=-\infty$ for some $s>0$, thus reaching a contradiction.

Let 
\be
m(s)=\theta(s)+(1+\epsilon)\frac{N'(s)}{N(s)}
\ee
where $N(s)=N(\gamma(s))$ and $N'(s)=d N(\gamma(s))/ds$. At $s=0$, $m$ is equal to minus the left hand side of (\ref{LHS}), therefore negative (note that $\gamma'(0)=-n$). On the other hand, as we explained in Section \ref{BM}, if $\epsilon<\sqrt{2}-1$ then the Bakry-Emery Ricci tensor 
\be
Ric_{f}^{c}=Ric +\nabla\nabla f-\frac{1}{c}\nabla f\nabla f
\ee
is zero, where $f=(1+\epsilon)\ln N$ and $1/c=(1-2\epsilon-\epsilon^{2})/(1+\epsilon)^{2}$. Now, it is shown in \cite{MR2577473} (Appendix A) that $m(s)$ satisfies the differential inequality
\be
m'\leq -\frac{m^{2}}{2+c}
\ee
Thus, if $m(0)<0$ then there is $s'>0$ such that $m(s')=-\infty$. But as $N'(s)/N(s)$ is finite for all $s$ then we must have $\theta(s')=-\infty$.         
\end{proof}  

\begin{Remark} If the complement of a compact set in $\Sigma$ is diffeomorphic to $\mathbb{R}^{3}\setminus \overline{\mathbb{B}^{3}}$ and $(\Sigma;g)$ is metrically complete, then the solution is Schwarzschild too, (i.e. the geodesic completeness of the spacetime at infinity is unnecessary). To see this observe first that $N$ cannot go uniformly to zero on the end of $\Sigma$ because this would violate the maximum principle ($N$ is harmonic and is zero only on $\partial \Sigma$). By Proposition \ref{PIII} $N$ is then bounded away from zero on the end and asymptotic flatness follows. 
\end{Remark}

\begin{Remark} It is easy to show that Propositions \ref{PI}, \ref{PII} and \ref{PIII} hold true when $\Sigma \approx \mathcal{S}\times \mathbb{R}^{+}$ with $\mathcal{S}$ a compact two-surface of arbitrary genus, (Proposition \ref{PI} corresponds to $\mathcal{S}=\mathbb{S}^{2}$). This could be of interest in further studies.
\end{Remark}

\bibliographystyle{plain}
\bibliography{Master}

\end{document}